\definecolor{Red}{rgb}{1,0,0}
\definecolor{Blue}{rgb}{0,0,1}
\definecolor{Olive}{rgb}{0.41,0.55,0.13}
\definecolor{Green}{rgb}{0,1,0}
\definecolor{MGreen}{rgb}{0,0.8,0}
\definecolor{DGreen}{rgb}{0,0.55,0}
\definecolor{Yellow}{rgb}{1,1,0}
\definecolor{Cyan}{rgb}{0,1,1}
\definecolor{Magenta}{rgb}{1,0,1}
\definecolor{Orange}{rgb}{1,.5,0}
\definecolor{Violet}{rgb}{.5,0,.5}
\definecolor{Purple}{rgb}{.75,0,.25}
\definecolor{Brown}{rgb}{.75,.5,.25}
\definecolor{Grey}{rgb}{.5,.5,.5}
\theoremstyle{plain}
\newtheorem{corollary}{Corollary}
\newtheorem{claim}{Claim}
\newtheorem{conj}{Conjecture}
\newtheorem{bound}{Bound}
\newtheorem{fact}{Fact}
\theoremstyle{remark}
\newtheorem{remark}{Remark}
\newtheorem{observation}{Observation}
\theoremstyle{definition}
\newcommand{\p}{{\rm P}}
\newcommand{\e}{{\rm E}}
\def\cU{{\cal U}}
\def\cV{{\cal V}}
\def\cW{{\cal W}}
\def\cX{{\cal X}}
\title{An information inequality for the BSSC channel}
\author{ Varun  Jog \and Chandra Nair}
\begin{document}
\maketitle
\begin{abstract}
We establish an information theoretic inequality concerning the binary skew-symmetric broadcast channel that was conjectured by one of the authors. This inequality helps to quantify the gap between the sum rate obtained by the inner bound  and outer bound for the binary skew-symmetric broadcast channel. 
\end{abstract}

\section{Introduction}

The broadcast channel is a fundamental network information theory setting modeling the communication of messages (private and common) from a single sender to multiple receivers. For formal definitions and early prior work the reader is referred to \cite{cov72,cov98}. There has been some recent progress for the discrete memoryless setting, and this work establishes a conjecture proposed in one of the recent papers \cite{naw08}.

We consider the broadcast channel where sender $X$ wishes to communicate independent messages $M_{1},M_{2}$ to two receivers $Y_{1},Y_{2}$. The capacity region for the broadcast channel is an open problem and the best known achievable region is due to Marton\cite{mar79} and is presented below.

\begin{bound}
\label{bd:inner}
\cite{mar79} The following region is achievable
\begin{align*}
R_1 &\leq I(U,W;Y_1) \\
R_2 &\leq I(V,W;Y_2) \\
R_1 + R_2 & \leq I(U,W;Y_1) + I(V;Y_2|W) - I(U;V|W) \\
R_1 + R_2 & \leq I(V,W;Y_2) + I(U;Y_1|W) - I(U;V|W) 
\end{align*}
for any triple of random variables $p(u,v,w)$ such that $(U,V,W) \to X \to (Y_1,Y_2)$ form a Markov chain.
\end{bound}

Capacity regions have been established for a number of special cases and in \emph{every case} where capacity is known, the following outer bound and Marton's inner bound yields the same region.

\begin{bound}
\label{bd:outer}
\cite{nae07} The union of rate pairs
\begin{align*}
R_1 &\leq I(U;Y_1) \\
R_2 &\leq I(V;Y_2) \\
R_1 + R_2 & \leq I(U;Y_1) + I(V;Y_2|U) \\
R_1 + R_2 & \leq I(V;Y_2) + I(U;Y_1|V) 
\end{align*}
over pairs of random variables $p(u,v)$   such that $(U,V) \to X \to (Y_1,Y_2)$ form a Markov chain constitutes an outer bound to the capacity region.
\end{bound}

\begin{figure}[ht]
\begin{center}
\begin{psfrags}
\psfrag{X}[r]{$X$}
\psfrag{Y}[l]{$Y_1$}
\psfrag{Z}[l]{$Y_2$}
\psfrag{p}[b]{$\frac 12$}
\psfrag{1-p}[c]{$\frac 12$}
\psfrag{0}[c]{$0$}
\psfrag{1}[c]{$1$}
\includegraphics[width=0.4\linewidth,angle=0]{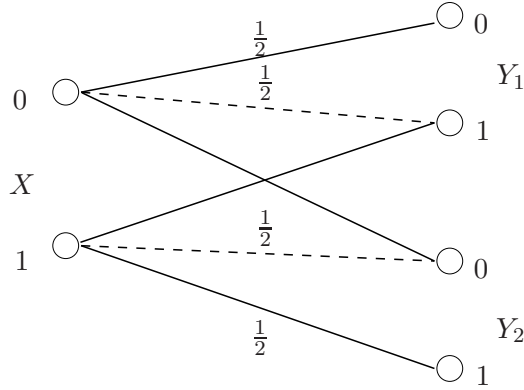}
\end{psfrags}
\caption{binary skew-symmetric broadcast channel}
\label{fig:bssc}
\end{center}
\end{figure}

In \cite{nae07} the authors studied Bound 2 for the binary skew symmetric channel  and showed that the line segment\footnote{There is a typo in the evaluation of this bound in the original paper though the main results are right. In the Appendix we will show the corrected derivation of this bound.} of $R_{1}+R_{2}=0.3725..$ lies on the boundary of the outer bound. In \cite{naw08} the authors studied Marton's inner bound for the binary skew-symmetric broadcast channel and showed that provided an information theoretic inequality (Conjecture \ref{conj:intro}) holds, a line segment of $R_{1}+R_{2}=0.3616...$ lies on the boundary of the Marton's inner bound.

\begin{conj}
\label{conj:intro} \cite{naw08} For the binary skew-symmetric channel shown in Figure \ref{fig:bssc},
$$ I(U;Y_{1})+I(V;Y_{2})-I(U;V) \leq \mbox{max}(I(X,Y_{1}),I(X,Y_{2}))$$
for all $(U,V,X)$  such that $(U,V)-X-(Y_{1},Y_{2})$ forms a Markov chain.
\end{conj}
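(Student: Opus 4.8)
The plan is to first normalize the problem. The BSSC channel has $X \in \{0,1\}$, with $Y_1$ obtained by passing $X$ through a $Z$-channel (crossover $\tfrac12$ on one transition) and $Y_2$ obtained symmetrically. Write $x = \p(X=1)$. A direct computation gives $I(X;Y_1)$ and $I(X;Y_2)$ as explicit functions of $x$, and by the skew symmetry $I(X;Y_1)(x) = I(X;Y_2)(1-x)$, so $\max(I(X;Y_1),I(X;Y_2))$ is a fixed, computable single-variable function $g(x)$ of the input distribution; at $x=\tfrac12$ both equal roughly $0.3219$. The quantity to be bounded, $F(U,V,X) := I(U;Y_1) + I(V;Y_2) - I(U;V)$, depends on the full joint law $p(u,v,x)$, but only through $p(x\mid u,v)$ and the marginals. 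The first reduction I would make is to cardinality bounds: by standard convexity/support arguments (Carathéodory-type, as in Marton's and Nair's analyses) it suffices to prove the inequality for $U,V$ with bounded alphabets, which makes $F$ a function on a compact set and legitimizes the variational calculus below.

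**Main approach: reduce to a local/perturbative optimization.**
The core idea is to fix the ``output'' structure and optimize. Consider maximizing $F$ over all $(U,V)$ subject to a fixed marginal $p_X$ (equivalently fixed $x$). I would argue that at the maximizer one can take $|\,\mathrm{supp}(U)| = |\,\mathrm{supp}(V)| = 2$: first reduce $V$ given $U$, then $U$ given $V$, using the fact that $I(U;Y_1) - I(U;V) = -H(U\mid Y_1) - H(U\mid V) + \text{const}$-type identities and the concavity of $H$ in the relevant simplex slices. With binary $U$ and $V$, the problem becomes finite-dimensional: $U \to X$ and $V \to X$ are each binary-input channels (i.e. each of $p(x\mid u)$, $p(x\mid v)$ is specified by two numbers in $[0,1]$), coupled through the joint $p(u,v)$. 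Then $F$ is an explicit (if messy) function of a handful of real parameters, and the inequality $F \le g(x)$ becomes a concrete calculus problem: check it on the boundary of the parameter region and at interior critical points where $\nabla F = 0$.

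**The expected main obstacle.**
The hard part will be the interior critical-point analysis — showing that no choice of nontrivial binary $(U,V)$ beats the degenerate choice $U = V = X$ (which achieves $F = I(X;Y_1)+I(X;Y_2)-H(X) \le \max(I(X;Y_1),I(X;Y_2))$, the last step holding because $\min(I(X;Y_1),I(X;Y_2)) \le H(X)$ should be checkable directly for the BSSC). The difficulty is that $F$ is not concave in the joint distribution, so one cannot simply invoke a convexity argument; the $-I(U;V)$ term fights the two positive mutual-information terms, and the extremizing configuration is genuinely a saddle-type balance. I would attack this by exploiting the skew symmetry to restrict attention to a symmetric slice, by using the channel's specific numerical structure (the $\tfrac12$ crossover makes many logs collapse), and, if a clean closed form resists, by a rigorous interval-arithmetic / SOS-type verification on the reduced low-dimensional domain. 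A secondary obstacle is justifying the two-letter reduction cleanly; if that fails one falls back to a slightly larger cardinality bound (e.g. $|U|,|V| \le 3$ or bounds following from the structure of the KKT conditions), at the cost of a higher-dimensional but still finite numerical check. An alternative, possibly cleaner route worth trying in parallel: use the ``rotation''/doubling trick — tensorize, consider $(U_1,U_2)$ vs.\ $(U_2,U_1)$ — to directly prove the single-letter bound without cardinality arguments, in the spirit of recent extremal entropy proofs.
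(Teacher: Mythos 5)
Your proposal is an outline of a strategy rather than a proof, and it is missing the two ideas that make the paper's argument actually go through. First, the finite-dimensional reduction you describe is not established. You appeal to ``standard convexity/support arguments (Carath\'eodory-type)'' to bound the cardinalities of $U,V$, but precisely because of the $-I(U;V)$ term this is not a standard Carath\'eodory situation: the paper needs the Gohari--Anantharam perturbation technique (Claim~2), and before that a separate reduction (Fact~1) showing one may take $X$ to be a \emph{deterministic} function $X=f(U,V)$, obtained by absorbing an independent randomizer $W$ into $V$. You never invoke this deterministic reduction, and without it your ``handful of parameters'' picture is wrong even for binary alphabets: a general joint $p(u,v,x)$ is not captured by $p(u,v)$ together with the two marginal channels $p(x\mid u)$, $p(x\mid v)$ as you suggest, but requires the full $p(x\mid u,v)$, so your reduced problem is both higher-dimensional and less structured than the one the paper analyzes (binary $U,V$ with $X=f(U,V)$, i.e.\ sixteen boolean functions, which symmetry cuts down to $X=U\land V$ and $X=U\oplus V$).

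Second, the heart of the matter --- ruling out nontrivial interior critical points --- is left unexecuted. Your degenerate-point check ($U=V=X$, using $\min(I(X;Y_1),I(X;Y_2))\le H(X)$) is fine but easy; the decisive work in the paper is the stationarity analysis itself: in the AND case a multiplicative perturbation preserving $\p(X=0)$ forces $p_{01}=2p_{00}$ and $p_{11}=4p_{10}$ at any local maximum, reducing the verification to the one-parameter identity culminating in $\tfrac32 h(\tfrac19)\le h(\tfrac13)$, and in the XOR case an additive perturbation (needed because the support can grow) shows the only interior critical point is the uniform one, where the objective vanishes. Your plan defers exactly this step to ``a concrete calculus problem'' to be settled, if necessary, by interval arithmetic or SOS certification, which is neither carried out nor obviously feasible in the seven-parameter space you would be working in. As it stands, the proposal identifies a plausible shape for the argument but supplies neither the reductions that make the case analysis finite and clean nor the extremal computations that constitute the proof.
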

It should be noted that this inequality was established in \cite{hap79} when $U,V$ were independent; and in \cite{naw08} for dependent $U,V$ and $P(X=0) \in [0,\frac{1}{5}] \cup [\frac{4}{5},1]$.

The outline of the proof is as follows: (Parts 1 and 2 were established in \cite{goa09} and is presented for completeness)
\begin{enumerate}
\item 
We show that to obtain the maximum sum-rate in Marton's region it is sufficient to consider $X$ as a function of the auxiliary random variables $U$ and $V$.
\item 
We further show that the cardinality of $U$ and $V$ can be restricted to $|\cX|$, which in this case is 2.
\item 
As $U$ and $V$ are binary, we conclude that the conjecture is true iff it holds true for each of the 16 cases where $X=f(U,V)$. We further prove that it is sufficient to consider only the two cases $X=U \land V$ and $X=U \oplus V$ by showing that either the other cases are immediate or it reduces to the above two cases.
\item
We prove the conjecture in $X=U \land V$ case using a multiplicative perturbation to derive properties about the distribution on $U,V$ which achieves the maxima in LHS of \ref{conj:intro}. 
\item
Similarly, we prove the conjecture in $X=U \oplus V$ case using an additive perturbation to derive properties about the distribution on $U,V$ which achieves the maxima in LHS of \ref{conj:intro}.

\end{enumerate}
\section{Main}
\label{sec;main}


Gohari and Anantharam \cite{goa09} established bounds on the cardinalities of the auxiliary random variables needed to evaluate the Marton's achievable region. 
In this section, we present a modified version of their arguments for completeness.

Define the following three quantities:
\begin{itemize}
\item $M = \max( I(U;Y_{1})+I(V;Y_{2})-I(U;V))$ over all $(U,V,X)$ such that  $(U,V)-X-(Y_{1},Y_{2})$ forms a Markov chain.
\item $M_d = \max (I(U;Y_{1})+I(V;Y_{2})-I(U;V))$ over all $(U,V,X)$ such that  $(U,V)-X-(Y_{1},Y_{2})$ forms a Markov chain and $X=f(U,V)$.
\item $M_d^{(|\cX|)} = \max (I(U;Y_{1})+I(V;Y_{2})-I(U;V))$ over all $(U,V,X)$ such that  $(U,V)-X-(Y_{1},Y_{2})$ forms a Markov chain;  $|\cU| \leq |\cX|$, $|\cV| \leq |\cX|$,
 and $X=f(U,V)$.
\end{itemize}

We will show that $M= M_d = M_d^{(|\cX|)}$ for any discrete-memoryless broadcast channel.

\begin{fact}
\label{fa:det}
$M=M_d$
\end{fact}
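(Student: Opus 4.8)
The inclusion $M_d\le M$ is immediate: the maximization defining $M_d$ is over a subset of the triples allowed in the definition of $M$ (those that additionally satisfy $X=f(U,V)$), so $M_d\le M$. The content is therefore the reverse inequality $M\le M_d$, and my plan is to show that every admissible $(U,V,X)$ can be replaced by an admissible triple $(U,V,X')$ with $X'$ a deterministic function of $(U,V)$ and with the objective $g=I(U;Y_1)+I(V;Y_2)-I(U;V)$ no smaller.

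First I would fix an admissible $(U,V,X)$ together with the marginal law $p(u,v)$ it induces, and regard $g$ as a function of the stochastic kernel $q=\{q(\cdot\mid u,v)\}$ alone, the channel $p(y_1,y_2\mid x)$ being held fixed; here $q$ ranges over the polytope $\prod_{(u,v):\,p(u,v)>0}\Delta_{\cX}$, a product of probability simplices. The term $I(U;V)$ depends only on $p(u,v)$ and is hence constant in $q$. The key step is to observe that $q\mapsto I(U;Y_1)$ is \emph{convex}: indeed $p_{Y_1\mid U}(\cdot\mid u)=\sum_v p(v\mid u)\sum_x q(x\mid u,v)\,p_{Y_1\mid X}(\cdot\mid x)$ is affine in $q$, the input law $p(u)$ is held fixed, and mutual information is convex in the channel for a fixed input distribution (via the identity $I(U;Y_1)=\sum_u p(u)\,D(p_{Y_1\mid U}(\cdot\mid u)\,\|\,p_{Y_1})$, the joint convexity of relative entropy, and the fact that precomposing a convex function with an affine map preserves convexity). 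The same reasoning applies to $q\mapsto I(V;Y_2)$, so $g$ is convex on the polytope.

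Finally, a convex function on a polytope attains its maximum at an extreme point, and the extreme points of $\prod_{(u,v)}\Delta_{\cX}$ are exactly those kernels for which each $q(\cdot\mid u,v)$ is a point mass. Replacing $q$ by such a maximizing extreme point (and defining $f$ arbitrarily on atoms with $p(u,v)=0$) yields an admissible $(U,V,X')$ with $X'=f(U,V)$, the same $p(u,v)$, and $g$ no smaller than its original value; taking the supremum over admissible $(U,V,X)$ then gives $M\le M_d$, and with the trivial inequality $M=M_d$. The step I expect to require the most care is the convexity of $q\mapsto I(U;Y_1)$ — pinning down that mutual information is convex in the channel when the input distribution is fixed, and correctly tracking the affine dependence of $p_{Y_1\mid U}$ on $q$; the rest is the extreme-point principle together with routine bookkeeping, including the harmless treatment of zero-probability atoms and the observation that the argument goes through at the level of suprema even if $M$ is not a priori attained.
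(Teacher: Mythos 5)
Your proof is correct, but it takes a genuinely different route from the paper's. The paper's own argument is a functional-representation reduction: since $(U,V)\to X\to (Y_1,Y_2)$, one can realize $X=f(U,V,W)$ for some $W$ independent of $(U,V)$; setting $V'=(V,W)$ makes $X$ a deterministic function of $(U,V')$, and the objective does not decrease because $I(U;V')=I(U;V)$ (as $W$ is independent of $(U,V)$) while $I(V';Y_2)\geq I(V;Y_2)$. You instead freeze $p(u,v)$ and the broadcast channel, regard the objective as a function of the kernel $q(x\mid u,v)$ on a product of simplices, use convexity of mutual information in the channel for a fixed input (composed with the affine map $q\mapsto p_{Y_1\mid U}$, resp.\ $q\mapsto p_{Y_2\mid V}$), and conclude by the extreme-point principle that a deterministic kernel does at least as well. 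The trade-off: the paper's proof is a one-line reduction but enlarges the alphabet of $V$ to $\cV\times\cW$ (harmless here, since the subsequent cardinality claim cuts $|\cU|,|\cV|$ back to $|\cX|$), whereas yours leaves $\cU,\cV$ untouched and makes explicit that randomizing $X$ given $(U,V)$ can only hurt. One minor caveat on your version: the step that a convex function attains its maximum at an extreme point of the polytope is immediate when $\cU\times\cV$ is finite (the setting actually needed in this paper); for infinite auxiliary alphabets it would require an additional compactness/semicontinuity or per-coordinate limiting argument, while the functional-representation proof is indifferent to alphabet sizes. Your handling of the trivial direction $M_d\leq M$, of zero-probability atoms, and of attainment at the level of suprema is fine.
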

\begin{proof}
Using standard arguments, for e.g. \cite{hap79}, there exists a random variable $W$ independent of $U,V$ such that $X=f(U,V,W)$. Now set $V'=(V,W)$ and observe that
$I(U;Y_{1})+I(V;Y_{2})-I(U;V) \leq I(U;Y_{1})+I(V';Y_{2})-I(U;V')$.
\end{proof}

\begin{remark}
One way to construct such a $W$ is the following\footnote{This construction was mentioned to one of the authors by Bruce Hajek.}: For every $u,v$ consider the sequence $t_i(u,v) = \p(X \leq i|U=u,V=v), 1 \leq i \leq |\cX|$. Mark the points $t_1(u,v), \ldots, t_{|\cX|}(u,v)(=1)$ for all choices of $(u,v)$ along the unit interval $[0,1]$. The points define intervals (at most $|\cU| |\cV| |\cX|$) and generate $W$ as an independent random variable with probabilities defined by the length of the intervals. As the $\p(X = i|U=u,V=v)$ can be thought of as $W$ falling in a certain consecutive set of appropriately chosen intervals, there is a natural mapping  $(U,V,W) \mapsto X$.
\end{remark}

\begin{claim}
\label{cl:card}
$M_d = M_d^{(|\cX|)}$
\end{claim}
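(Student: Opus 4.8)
The plan is to use the standard Carathéodory/perturbation argument for cardinality bounds on auxiliary random variables, adapted to the fact that here $X = f(U,V)$ is deterministic given $(U,V)$, so we must preserve this structure while reducing alphabet sizes. Fix an optimal triple $(U,V,X)$ with $X = f(U,V)$ achieving $M_d$. I would reduce $|\mathcal U|$ first, then $|\mathcal V|$ by a symmetric argument.

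To bound $|\mathcal U|$: for each fixed value $u$, the pair $(U,V)$ together with the deterministic map $f$ induces a conditional distribution $p(x\mid U=u)$ on $\mathcal X$, hence a conditional channel output distribution $p(y_1\mid U=u)$ and $p(y_2\mid U=u)$, and also a conditional law $p(v\mid U=u)$ on $\mathcal V$. The quantity to be preserved under a replacement of the distribution $p(u)$ is $I(U;Y_1) + I(V;Y_2) - I(U;V)$. Writing $I(V;Y_2) - I(U;V) = H(V) - H(V\mid Y_2) - H(V) + H(V\mid U) = H(V\mid U) - H(V\mid Y_2)$, the objective becomes
\[
I(U;Y_1) + H(V\mid U) - H(V\mid Y_2) = H(Y_1) - H(Y_1\mid U) - H(V\mid U) \cdot(-1) - H(V\mid Y_2),
\]
so — being careful with signs — it equals $\bigl[H(Y_1) - H(V\mid Y_2)\bigr] - \bigl[H(Y_1\mid U) - H(V\mid U)\bigr]$. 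Now $H(Y_1)$ is a continuous function of the marginal $p(y_1)$, which is linear in $p(u)$; similarly $p(v,y_2)$, and hence $H(V\mid Y_2)$ as $H(V,Y_2)-H(Y_2)$, depends continuously on quantities linear in $p(u)$; and the term $H(Y_1\mid U) - H(V\mid U) = \sum_u p(u)\bigl[H(Y_1\mid U=u) - H(V\mid U=u)\bigr]$ is linear in $p(u)$ with coefficients depending only on $p(\cdot\mid U=u)$, which we hold fixed. The connection $p(v,x) = \sum_u p(u)\, p(v\mid u)\, \mathbf 1\{x = f(u,v)\}$ shows $p(v,x)$, and hence $p(y_1)$ and $p(v,y_2)$, are all linear images of $p(u)$. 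So I would invoke the Fenchel–Eggleston–Carathéodory theorem: to preserve the $|\mathcal X|-1$ components of $p(x)$ that determine the output marginals together with the single scalar value of $H(Y_1\mid U)-H(V\mid U)$ — wait, more precisely the $|\mathcal V|-1$ components of $p(v)$ (to keep $I(U;V)$'s dependence honest) and enough moments — one needs at most (number of preserved linear functionals) $+1$ mass points. I would organize it so that preserving $p(x)$ (which fixes $H(Y_1)$ and also is needed downstream) plus the scalar objective requires $|\mathcal X|$ atoms, giving $|\mathcal U| \le |\mathcal X|$; a replacement $p(u)$ supported on that many points keeps $X=f(U,V)$ automatically since $f$ is unchanged.

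The main obstacle I anticipate is bookkeeping the correct set of linear functionals to preserve: we must simultaneously keep $p(x)$ fixed (so that after also reducing $V$ the two reductions are compatible and the channel $X\to(Y_1,Y_2)$ sees the same input), keep $p(v)$ fixed (so $H(V)$, and the $H(V\mid Y_2)$ piece through $p(v,y_2)$, behave correctly), and keep the conditional-objective scalar. One has to check this is $|\mathcal X|-1$ (for $p(x)$) plus possibly more, and argue it still collapses to $|\mathcal U|\le|\mathcal X|$ — this likely uses that fixing $p(x)$ already fixes $p(y_1)$ and $p(y_2)$, so $H(Y_1)$ and $H(Y_2)$ are not extra constraints, and that the remaining objective is a single real number. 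Once $|\mathcal U|\le|\mathcal X|$ is done with $p(x)$ preserved, I would repeat the argument verbatim in the other coordinate to get $|\mathcal V|\le|\mathcal X|$, noting that the first reduction already fixed everything the second one needs. Thus $M_d = M_d^{(|\mathcal X|)}$, with the inequality $M_d^{(|\mathcal X|)} \le M_d$ being trivial and $\ge$ coming from the construction above.
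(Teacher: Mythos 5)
Your plan---fix the conditionals $p(v\mid u)$ and the function $f$, regard everything as a function of $p(u)$, and reduce the support of $p(u)$ by Fenchel--Eggleston--Carath\'eodory while preserving $p(x)$ plus ``one more scalar''---breaks down exactly at the step you flag and then wave through. The problematic term is $I(V;Y_2)$, equivalently $H(V\mid Y_2)=H(V,Y_2)-H(Y_2)$: it depends on the joint law $p(v,y_2)$, which is a linear image of $p(u)$ but is \emph{not} determined by $p(x)$ (nor by $p(x)$ together with $p(v)$), and it enters the objective through a nonlinear (strictly concave) entropy functional, not through a linear functional $\sum_u p(u)g(u)$. A Carath\'eodory-type support reduction can only guarantee preservation of finitely many linear functionals of $p(u)$; you cannot simply append ``the value of $H(V\mid Y_2)$'' to the list. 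To freeze that term you would have to preserve the whole joint $p(v,x)$, i.e.\ on the order of $|\cV||\cX|$ linear constraints, which only yields a bound like $|\cU|\le |\cV||\cX|$, not $|\cU|\le|\cX|$. This is precisely why the bound $|\cU|,|\cV|\le|\cX|$ (due to Gohari and Anantharam) is nontrivial, and your sketch does not supply the missing mechanism.

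The paper closes this gap by exploiting \emph{optimality} of $p$, not just feasibility. It uses the multiplicative perturbation $q(u,v,x)=p(u,v,x)(1+\epsilon L(u))$ with $\e(L\mid X=x)=0$ for every $x$ (a nonzero such $L$ exists as soon as $|\cU|>|\cX|$, and the multiplicative form keeps $X=f(U,V)$, which your replacement also does). Along such a perturbation, $H(Y_1)$ and $H(Y_2)$ are constant, $H_q(U,V)-H_q(U,Y_1)$ is affine in $\epsilon$ (the $\log(1+\epsilon L(u))$ contributions cancel between the two terms), and the only non-affine term is $-H_q(V,Y_2)$, which is convex in $\epsilon$. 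The second-order condition at a maximizer therefore forces $\e(L\mid V,Y_2)=0$, i.e.\ the perturbation in fact leaves $p(v,y_2)$ unchanged, and the first-order condition then makes the objective constant in $\epsilon$; pushing $\epsilon$ until $1+\epsilon L(u^*)=0$ removes an atom of $U$ without changing the value, and induction gives $|\cU|\le|\cX|$, then symmetrically $|\cV|\le|\cX|$. Some such appeal to the first- and second-order conditions at the maximizer (or an equivalent device) is indispensable; without it your constraint count cannot be brought down to $|\cX|$, so the proposal as written has a genuine gap.
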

\begin{proof}
This is a simplified version of the arguments of Gohari and Anantharam \cite{goa09}, adapted to this setting. For a given $p(u,v,x)$ consider the multiplicative Lyapunov perturbation defined by
$ q(u,v,x) = p(u,v,x)(1 + \epsilon L(u))$. For $q(u,v,x)$ to be a valid probability distribution we require the following two conditions: $1 + \epsilon L(u) \geq 0, \forall u$ and $\sum_u p(u)L(u) = 0$.  {\it Note:} If $p(u,v,x)=0$ then $q(u,v,x)=0$ and hence $X$ continues to be a function of $(U,V)$ under any such perturbation.

If distribution $p(u,v,x)$ maximizes $I(U;Y_{1})+I(V;Y_{2})-I(U;V)$ then we must have that for any valid perturbation
\begin{enumerate}
\item $\frac{\partial}{\partial \epsilon} I(U;Y_{1})+I(V;Y_{2})-I(U;V) = 0,$
\item $\frac{\partial^2}{\partial \epsilon^2} I(U;Y_{1})+I(V;Y_{2})-I(U;V) \leq 0.$
\end{enumerate}

Consider a class of perturbations $L(u)$ such that 
\begin{equation}
\label{eq:cons}
\e(L|X=x)=\sum_{u,v} p(u,v|x) L(u) = 0, \forall x \in \cX.
\end{equation}
 Observe that these perturbations keep the distributions of $\cX$ (hence $Y_1,Y_2$) unchanged.

\begin{observation}
There exists such a non-zero perturbation if $|\cU| > |\cX|$ since the null-space of the constraints have rank at most $|\cX|$.
\end{observation}

Observe that
\begin{align*}
& I_q(U;Y_{1})+I_q(V;Y_{2})-I_q(U;V) \\
&\quad = H_q(Y_1) + H_q(Y_2) + H_q(U,V) - H_q(U,Y_1) - H_q(V,Y_2) \\
& \quad = H_p(Y_1) + H_p(Y_2) + H_q(U,V) - H_q(U,Y_1) - H_q(V,Y_2) \\
& \quad = H_p(Y_1) + H_p(Y_2) + H_p(U,V) + \epsilon H_p^L(U,V) - H_p(U,Y_1) - \epsilon H_p^L(U,Y_1) - H_q(V,Y_2).
\end{align*}
Here $H_p^L(U,V) = -\sum_{u,v} p(u,v)L(u)\log p(u,v)$, $H_p^L(U,Y_1) = -\sum_{u,y_1} p(u,y_1)L(u)\log p(u,y_1)$.

Therefore, $\frac{\partial^2}{\partial \epsilon^2} I(U;Y_{1})+I(V;Y_{2})-I(U;V) \leq 0$ implies $\frac{\partial^2}{\partial \epsilon^2} H_q(V,Y_2) \geq 0$ and this implies
$$ E(E(L|V,Y_2)^2) \leq 0$$
or in particular $E(L|V,Y_2) = 0$ whenever $p(v,y_2) \neq 0$. This, in turn, implies
$$ H_q(V,Y_2) = H_p(V,Y_2).$$

Using this we obtain
\begin{align*}
& I_q(U;Y_{1})+I_q(V;Y_{2})-I_q(U;V) \\
& \quad = H_p(Y_1) + H_p(Y_2) + H_p(U,V) + \epsilon H_p^L(U,V) - H_p(U,Y_1) - \epsilon H_p^L(U,Y_1) - H_q(V,Y_2) \\
& \quad = H_p(Y_1) + H_p(Y_2) + H_p(U,V) + \epsilon H_p^L(U,V) - H_p(U,Y_1) - \epsilon H_p^L(U,Y_1) - H_p(V,Y_2).
\end{align*}

The first derivative being zero implies $H_p^L(U,V) - H_p^L(U,Y_1) = 0 $ and this finally implies that if $p(u,v,x)$ attains the maximum of $I(U;Y_{1})+I(V;Y_{2})-I(U;V)$
then $I_q(U;Y_{1})+I_q(V;Y_{2})-I_q(U;V) = I_p(U;Y_{1})+I_p(V;Y_{2})-I_p(U;V)$ for any valid perturbation that satisfies $\eqref{eq:cons}$.

Now we choose $\epsilon$ such that $\min_u 1+\epsilon L(u) = 0$, and let $u=u^*$ achieve this minimum. Observe that $q(u^*) = 0$ and hence there exists an $U$ with cardinality equal to $|\cU|-1$ (at most) such that $I(U;Y_{1})+I(V;Y_{2})-I(U;V)$ is constant. We can proceed by induction until $|\cU| = |\cX|$. Observe that when $|\cU| = |\cX|$, we are no longer guaranteed the existence of a non-trivial $L(u)$ satisfying $\eqref{eq:cons}$.

The argument can then be repeated for $V$ to make $|\cV| \leq |\cX|$ as well.

This completes the proof that $M_d = M_d^{(|\cX|)}$.
\end{proof}

\begin{remark}
\label{rem:main}
Use Fact \ref{fa:det} and Claim \ref{cl:card}, to prove the conjecture \ref{conj:intro} it suffices to consider binary $U,V$ and $X=f(U,V)$. There are $16$ possible boolean functions on binary $(U,V)$ and we establish the conjecture for each such function.
\end{remark}

We use the following notation:  $U \land V$ (and),  $U \lor V$ (or), $U \oplus V$ (xor), $\bar{U}$ (not).

\begin{observation}
\label{ob:key}
Each of the  following groups of functions are equivalent upto re-labeling (of either $U$ or $V$ or both)
\begin{itemize}
\item $X=U, X = \bar U$,
\item $X=V, X = \bar V$,
\item $X=U \land V$, $X=\bar{U} \land V$, $X=U \land \bar{V}$, $X=\bar{U} \land \bar{V}$,
\item $X=U \lor V$, $X=\bar{U} \lor V$, $X=U \lor \bar{V}$, $X=\bar{U} \lor \bar{V}$,
\item $X=U \oplus V$, $X=\bar{U} \oplus V$
\end{itemize}
\end{observation}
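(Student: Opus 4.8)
The plan is to argue purely at the level of the underlying joint distribution, observing that the quantity $I(U;Y_1)+I(V;Y_2)-I(U;V)$ and the Markov chain $(U,V)-X-(Y_1,Y_2)$ are both invariant under relabeling the alphabet of $U$, the alphabet of $V$, or both. First I would note that if $U$ is a binary random variable and $\sigma$ is the transposition $0 \leftrightarrow 1$, then replacing $U$ by $\sigma(U)$ (and leaving $V,X,Y_1,Y_2$ untouched) changes nothing: mutual information $I(U;Y_1)$ and $I(U;V)$ depend only on the joint law up to bijective reparametrization of each marginal, and the channel from $X$ is unaffected. The same holds for relabeling $V$. Consequently, two deterministic assignments $X=f(U,V)$ and $X=g(U,V)$ yield the same value of $M_d^{(2)}$ whenever $g(u,v)=f(\pi(u),\rho(v))$ for some permutations $\pi,\rho$ of $\{0,1\}$, since the supremum defining $M_d^{(2)}$ ranges over all admissible $p(u,v)$ and the substitution $p(u,v)\mapsto p(\pi(u),\rho(v))$ is a bijection of that feasible set.

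Next I would verify the five listed groupings by exhibiting, for each pair of functions in a group, the relabeling that identifies them. For the first group, $\bar U \;(=1\oplus U)$ is obtained from $U$ by the transposition $\pi$ on $U$'s alphabet, so $X=U$ and $X=\bar U$ are equivalent; symmetrically for the second group with $V$. For the \textsc{and} group: $\bar U \land V$ is $f(\pi(u),v)$ with $f=\land$; $U\land \bar V$ is $f(u,\rho(v))$; and $\bar U \land \bar V$ is $f(\pi(u),\rho(v))$ — all four are relabelings of $U\land V$. The \textsc{or} group is handled identically with $f=\lor$. For the \textsc{xor} group, $\bar U \oplus V = 1\oplus U\oplus V$ is obtained from $U\oplus V$ by the transposition on $U$; note that $U\oplus \bar V$ gives the same function and $\bar U \oplus \bar V = U\oplus V$, which is why this group has only the two listed (distinct) representatives.

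The content of this observation is entirely bookkeeping — there is no real obstacle, only the need to be careful that the relabeling is applied consistently to the whole joint distribution so that both the objective and the Markov/determinism constraints are preserved. The point worth emphasizing is that this reduces the sixteen boolean functions to essentially five cases: $X=U$ (constant-like / one receiver sees nothing new), $X=V$, $X=U\land V$, $X=U\lor V$, $X=U\oplus V$; combined with a further reduction showing the degenerate and \textsc{or} cases are either trivial or reducible to \textsc{and}, only $X=U\land V$ and $X=U\oplus V$ remain, which are dispatched by the perturbation arguments of steps 4 and 5 in the outline.
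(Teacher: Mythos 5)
Your proposal is correct and is exactly the argument the paper intends: the observation is stated there without proof, and your relabeling-invariance argument (a permutation of the alphabet of $U$ and/or $V$ bijectively maps the feasible set of distributions onto itself while leaving $I(U;Y_1)+I(V;Y_2)-I(U;V)$, the distribution of $X$, and hence $\max(I(X;Y_1),I(X;Y_2))$ unchanged) is precisely what ``equivalent upto re-labeling'' means. You also correctly keep the $\land$ and $\lor$ groups separate, since identifying those requires the channel's skew-symmetry and relabeling $X$, which is handled in Claim~\ref{cl:cond} rather than by this observation.
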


\begin{claim}
\label{cl:triv}
The conjecture is valid when $X = 0, X = 1, X = U, X = V$.
\end{claim}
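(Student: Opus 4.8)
The plan is to dispose of the four trivial boolean functions $X=0$, $X=1$, $X=U$, $X=V$ by direct inspection of both sides of the claimed inequality, using only basic properties of mutual information and the Markov chain $(U,V)-X-(Y_1,Y_2)$.

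First I would handle the constant cases $X=0$ and $X=1$ together. If $X$ is deterministic then $Y_1,Y_2$ are deterministic as well, so $I(U;Y_1)=I(V;Y_2)=0$, and since $I(U;V)\geq 0$ the left-hand side $I(U;Y_1)+I(V;Y_2)-I(U;V)$ is $\leq 0$. The right-hand side $\max(I(X;Y_1),I(X;Y_2))$ is $\geq 0$ (mutual information is nonnegative), so the inequality holds trivially. In fact both sides are $0$ here.

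Next I would treat $X=U$ (the case $X=V$ being symmetric). When $X=U$, the Markov chain gives $U-X-(Y_1,Y_2)$ with $U=X$, so $I(U;Y_1)=I(X;Y_1)$. It then suffices to show $I(X;Y_1)+I(V;Y_2)-I(U;V) \leq \max(I(X;Y_1),I(X;Y_2))$, for which it is enough to prove $I(V;Y_2)-I(U;V)\leq I(X;Y_2)$, i.e. $I(V;Y_2)\leq I(X;V)+I(X;Y_2)$ (using $U=X$). Since $V-X-Y_2$ is a Markov chain, the data processing inequality gives $I(V;Y_2)\leq I(V;X)=I(X;V)\leq I(X;V)+I(X;Y_2)$, which finishes the case. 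Combining with the constant cases and Observation \ref{ob:key} (which folds $X=\bar U$ into $X=U$ and $X=\bar V$ into $X=V$), the claim follows.

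I do not anticipate a genuine obstacle here; the only mild subtlety is making sure the reduction in the $X=U$ case isolates exactly the inequality that data processing delivers, and noting that we may freely discard the nonnegative term $I(X;Y_2)$ on the right. The substantive difficulties in the paper are deferred to the $X=U\wedge V$ and $X=U\oplus V$ cases, which require the perturbation arguments; the present claim is a warm-up that clears the easy quarter of the sixteen boolean functions.
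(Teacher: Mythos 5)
Your argument is essentially the paper's: the constant cases follow from nonnegativity of mutual information, and the $X=U$, $X=V$ cases follow from the data processing inequality $I(V;Y_2)\le I(V;X)$. Two of your intermediate statements are inaccurate, though neither breaks the proof: when $X$ is constant the outputs $Y_1,Y_2$ are \emph{not} deterministic (the BSSC is noisy) — what you actually need is that the Markov chain makes $(Y_1,Y_2)$ independent of $(U,V)$, so the left-hand side equals $-I(U;V)$, which need not be $0$; and in the $X=U$ case the inequality you declare sufficient, $I(V;Y_2)-I(U;V)\le I(X;Y_2)$, is not by itself sufficient, since it only bounds the left-hand side by $I(X;Y_1)+I(X;Y_2)$ rather than by the maximum. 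Fortunately the data-processing chain you invoke actually proves the stronger statement $I(V;Y_2)\le I(V;X)=I(U;V)$, i.e. $I(V;Y_2)-I(U;V)\le 0$, which is exactly what the paper uses and what the claim requires.
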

\begin{proof}
In the first two cases, the conjecture reduces to $-I(U;V) \leq 0$ (true by non-negativity of mutual information). In the third case conjecture follows from data processing inequality as $I(V;Y_2) \leq  I(V;X)$ and hence $I(X;Y_1) + I(V;Y_2) - I(V;X) \leq I(X;Y_1)$. The fourth case follows in a similar manner as the third.
\end{proof}

\begin{claim}
\label{cl:cond}
The conjecture is valid for all distributions $p(u,v)$ when $X=U\land V$ , {\em if and only if} the conjecture is valid for all distributions $q(u,v)$ when $X=U \lor V$.
\end{claim}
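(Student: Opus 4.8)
The plan is to exploit the defining \emph{skew-symmetry} of the BSSC together with De~Morgan's law, so that an instance of the ``$\lor$'' case is literally a relabelling of an instance of the ``$\land$'' case. The key fact about the channel is this: interchanging the labels $0\leftrightarrow 1$ at the input and simultaneously complementing \emph{and swapping} the two outputs leaves the channel law unchanged. Concretely, if $(Y_1,Y_2)$ denotes the outputs produced by the BSSC on input $X$, then $(\bar Y_2,\bar Y_1)$ is a legitimate realization of the BSSC outputs on input $\bar X$, and it can be taken conditionally independent of everything upstream of $X$ given $\bar X$. I would record this as a one-line lemma: since the conjecture (and the whole achievable/outer region) depends only on the marginals $p(y_1|x)$, $p(y_2|x)$, we may assume $Y_1\perp Y_2\mid X$, and then the claimed identity is an immediate check on the two $2\times2$ transition tables.

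Given this, assume the conjecture holds for every $p(u,v)$ with $X=U\land V$, and fix an arbitrary $q(u,v)$ with $X=U\lor V$ and BSSC outputs $(Y_1,Y_2)$. Set $\tilde U=\bar V$, $\tilde V=\bar U$, $\tilde X=\bar X$. By De~Morgan, $\tilde U\land\tilde V=\bar U\land\bar V=\overline{U\lor V}=\bar X=\tilde X$, so $(\tilde U,\tilde V,\tilde X)$ is an instance of the ``$\land$'' case; and by the skew-symmetry lemma, $(\tilde Y_1,\tilde Y_2):=(\bar Y_2,\bar Y_1)$ is a valid realization of the BSSC outputs on input $\tilde X$, with $(\tilde U,\tilde V)-\tilde X-(\tilde Y_1,\tilde Y_2)$ a Markov chain. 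Applying the assumed ``$\land$''-case inequality to $(\tilde U,\tilde V,\tilde X)$ gives
\[
I(\tilde U;\tilde Y_1)+I(\tilde V;\tilde Y_2)-I(\tilde U;\tilde V)\le \max\bigl(I(\tilde X;\tilde Y_1),\,I(\tilde X;\tilde Y_2)\bigr).
\]
Now I would rewrite each term using invariance of mutual information under coordinate-wise bijections: $I(\tilde U;\tilde Y_1)=I(\bar V;\bar Y_2)=I(V;Y_2)$, $I(\tilde V;\tilde Y_2)=I(\bar U;\bar Y_1)=I(U;Y_1)$, $I(\tilde U;\tilde V)=I(U;V)$, and on the right $I(\tilde X;\tilde Y_1)=I(X;Y_2)$, $I(\tilde X;\tilde Y_2)=I(X;Y_1)$. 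Substituting, the displayed inequality becomes exactly $I(U;Y_1)+I(V;Y_2)-I(U;V)\le\max\bigl(I(X;Y_1),I(X;Y_2)\bigr)$ with $X=U\lor V$, i.e.\ the ``$\lor$''-case conjecture for $q$. The construction is symmetric: the same relabelling sends a ``$\lor$'' instance to a ``$\land$'' instance, so the reverse implication follows verbatim, giving the ``if and only if''.

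The argument is purely combinatorial; the only care needed is in the bookkeeping of which variables are complemented and whether the pairs $U,V$ and $Y_1,Y_2$ get interchanged. De~Morgan forces complementing all of $U,V,X$, and then the skew-symmetry of the BSSC forces the interchange $Y_1\leftrightarrow Y_2$ (an output relabelling without the swap cannot realize the symmetry, since the marginal channels $X\to Y_1$ and $X\to Y_2$ are not equal); it is this forced swap that makes $\bar V$, rather than $\bar U$, the correct partner of $Y_1$ in the transformed instance. I expect the only step where a slip is easy to be the skew-symmetry lemma itself — in particular checking that the complemented, swapped outputs can be realized while preserving the Markov chain $(U,V)-X-(Y_1,Y_2)$ — and beyond that there is no real obstacle.
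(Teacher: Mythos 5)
Your proof is correct and takes essentially the same route as the paper: skew-symmetry of the BSSC combined with De Morgan's law ($\bar X=\bar U\land\bar V$ when $X=U\lor V$), with your relabeling $\tilde U=\bar V$, $\tilde V=\bar U$ inducing exactly the paper's bijection $p_{00}\leftrightarrow q_{11}$, $p_{01}\leftrightarrow q_{01}$, $p_{10}\leftrightarrow q_{10}$, $p_{11}\leftrightarrow q_{00}$. The only difference is presentational: the paper checks the equivalence by writing both conjectured inequalities out in explicit binary-entropy form, while you argue at the level of random variables via invariance of mutual information under bijective relabelings.
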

\begin{proof}
This follows from the skew-symmetry of the channel and that $X = U \lor V$ is equivalent to $\bar X = \bar U \land \bar V$. Let $P(U=i,V=j)=p_{ij}$ for every $i,j \in \{0,1\}$. For concreteness, when $X= U \land V$ the conjecture is equivalent to
\begin{align}
& h(\frac{p_{00} + p_{01} + p_{10}}{2}) - (p_{00}+p_{01})h(\frac 12) - (p_{10}+p_{11})h(\frac{p_{10}}{2(p_{10}+p_{11})}) + h(\frac{p_{11}}{2}) \nonumber\\
&  - (p_{01}+p_{11})h(\frac{p_{11}}{2(p_{01}+p_{11})}) - h(p_{00}+p_{01}) + (p_{00}+p_{10})h(\frac{p_{00}}{p_{00}+p_{10}}) + (p_{01}+p_{11})h(\frac{p_{01}}{p_{01}+p_{11}}) \nonumber \\
&\quad \leq \max \left\{ h(\frac{p_{00} + p_{01} + p_{10}}{2}) - (p_{00} + p_{01} + p_{10}) h(\frac 12), h(\frac{p_{11}}{2}) - p_{11} h(\frac 12) \right\},
\label{eq:and}
\end{align}
where $h(x) = -x \log_2 (x) - (1-x) \log_2(1-x)$ represents the binary entropy function.

\medskip
For the $X= U \lor V$ case, let $P(U=i,V=j)=q_{ij}$ for every $i,j \in {0,1}$.\\
The conjecture is now equivalent to
\begin{align}
& h(\frac{q_{00}}{2}) - (q_{01}+q_{00})h(\frac{q_{00}}{2(q_{01}+q_{00})}) +  h(\frac{q_{11} + q_{01} + q_{10}}{2}) - (q_{11}+q_{01})h(\frac 12) \nonumber \\
&    - (q_{10}+q_{00})h(\frac{q_{10}}{2(q_{10}+q_{00})})   - h(q_{11}+q_{01}) + (q_{11}+q_{10})h(\frac{q_{11}}{q_{11}+q_{10}}) + (q_{01}+q_{00})h(\frac{q_{01}}{q_{01}+q_{00}}) \nonumber \\
&\quad \leq \max \left\{ h(\frac{q_{11} + q_{01} + q_{10}}{2}) - (q_{11} + q_{01} + q_{10}) h(\frac 12), h(\frac{q_{00}}{2}) - q_{00} h(\frac 12) \right\}, \label{eq:or}
\end{align}

The bijection $p_{00} \leftrightarrow q_{11}, p_{01} \leftrightarrow q_{01}, p_{10} \leftrightarrow q_{10}, p_{11} \leftrightarrow q_{00}$ completes the proof of the equivalence of the conjectures under the constraints $X= U \land V$  and $X= U \lor V$.
\end{proof}

\begin{corollary}
From Remark \ref{rem:main}, Observation \ref{ob:key}, Claims \ref{cl:triv}, and \ref{cl:cond} it follows that the conjecture is true provided it holds when $X=U \land V$ and $X = U \oplus V$.
\end{corollary}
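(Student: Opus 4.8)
The plan is essentially bookkeeping: combine the two cardinality reductions already proved with the labeling symmetries of the channel to collapse the $16$ Boolean functions down to the two asserted cases. First I would invoke Remark \ref{rem:main}, which rests on Fact \ref{fa:det} and Claim \ref{cl:card}: the supremum defining $M$ is attained with $U,V$ binary and $X=f(U,V)$ for some Boolean $f$, so Conjecture \ref{conj:intro} is equivalent to the conjunction of $16$ inequalities, one per $f$, each a function of the single distribution $p(u,v)$ on $\{0,1\}^2$.

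Next I would observe that swapping $U \leftrightarrow \bar U$ and/or $V \leftrightarrow \bar V$ is a bijection of the simplex of distributions $p(u,v)$ that leaves both the channel and the quantity $I(U;Y_1)+I(V;Y_2)-I(U;V)$ unchanged, since it merely permutes the entries $p_{ij}$ while preserving all the joint and marginal entropies involved; hence the conjecture holds for a given $f$ iff it holds for every relabeling of $f$. Observation \ref{ob:key} records these relabeling classes, and together with the two constant functions they exhaust all $16$ cases: $\{0,1\}$, $\{U,\bar U\}$, $\{V,\bar V\}$, the four AND-type functions, the four OR-type functions, and the two XOR-type functions. So one representative per class suffices, i.e. it is enough to treat $X=0$, $X=U$, $X=V$, $X=U\land V$, $X=U\lor V$, and $X=U\oplus V$.

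Then I would clear the easy classes. Claim \ref{cl:triv} disposes of $X=0$, $X=U$, and $X=V$ (respectively by nonnegativity of mutual information and by the data-processing inequality $I(V;Y_2)\le I(V;X)$). Claim \ref{cl:cond} shows, via the skew-symmetry of the channel and the bijection $p_{00}\leftrightarrow q_{11},\ p_{01}\leftrightarrow q_{01},\ p_{10}\leftrightarrow q_{10},\ p_{11}\leftrightarrow q_{00}$ on the simplex, that the $X=U\lor V$ inequality is equivalent to the $X=U\land V$ one. What remains is exactly $X=U\land V$ and $X=U\oplus V$, which is the content of the corollary.

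The genuinely new work lies not in this corollary but in those two surviving cases, so the only point requiring care here is the second step: checking that the five classes of Observation \ref{ob:key} together with the two constants really account for all $16$ functions, and that each relabeling is measure-preserving on the simplex. Both are immediate by direct inspection. The two substantive inequalities, $X=U\land V$ and $X=U\oplus V$, are then established in the subsequent sections by a multiplicative and an additive perturbation argument, respectively.
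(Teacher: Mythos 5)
Your proposal is correct and follows essentially the same route the paper intends: Remark \ref{rem:main} reduces to the $16$ Boolean functions, Observation \ref{ob:key} collapses them via relabeling invariance, Claim \ref{cl:triv} handles the constant and single-variable cases, and Claim \ref{cl:cond} converts OR to AND, leaving exactly $X=U\land V$ and $X=U\oplus V$. The only cosmetic remark is that $X=1$ is not a relabeling of $X=0$ (negating $U$ or $V$ fixes constant functions), but this is immaterial since Claim \ref{cl:triv} covers both constants directly, as you implicitly use.
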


\subsection{Case 1: X= $U \wedge V$}

We prove the conjecture in this case by studying the local maxima. Clearly the conjecture is true when two of the three terms $p_{00},p_{01},p_{10}$ are identically $0$. When this happens, then the condition reduces to $X=U,V=1$, $X=V,U=1$, or $U=V=X$, each of which is solved by Claim \ref{cl:triv}. Clearly if $p_{11}=0$ then $X=0$; in which case the conjecture is valid. So we assume that $p_{11} > 0$. Therefore, we only establish the validity of the conjecture for the remaining cases.

Consider a perturbation $q(u,v,x)=p(u,v,x)(1+ \epsilon L(u,v))$ that maintains $\p(X=0)$. This implies that the perturbation satisfies 
\begin{equation}
L_{11}=0, ~~~ p_{00} L_{00} + p_{01} L_{01} + p_{10} L_{10} = 0.
\label{eq:con11}
\end{equation}

For any local maxima of $I(U;Y_{1})+I(V;Y_{2})-I(U;V)$, the derivative with respect to $\epsilon$ must be zero for all perturbations satisfying \eqref{eq:con11}, i.e.
\begin{equation} H_L(U,V) = H_{E(L|U,Y_1)}(U,Y_1) + H_{E(L|V,Y_2)}(V,Y_2). \label{eq:main1} \end{equation}
The terms $H_L(U,V), H_{E(L|U,Y_1)}(U,Y_1), H_{E(L|V,Y_2)}(V,Y_2)$ correspond to
\begin{align*}
H_L(U,V) & = -p_{00}L_{00}\log p_{00}-p_{10}L_{10}\log p_{10} - p_{01}L_{01} \log p_{01} , \\
H_{E(L|U,Y_1)}(U,Y_1) & = - (p_{00}L_{00} + p_{01}L_{01})\log (\frac{p_{00}+p_{01}}{2}) - \frac{p_{10} L_{10}}{2}  \log \frac{p_{10}}{2} \\
&\qquad - \frac{p_{10} L_{10}}{2} \log (\frac{p_{10}}{2} + p_{11}), \\
H_{E(L|V,Y_2)}(V,Y_2) & = - ( p_{00}L_{00} + p_{10}L_{10}) \log (p_{00} + p_{10}) - p_{01} L_{01} \log (p_{01} + \frac{p_{11}}{2}).
\end{align*}

\subsubsection{Case 1.1 $p_{00},p_{01},p_{10},p_{11} > 0$}

In this case the conditions \eqref{eq:con11} and \eqref{eq:main1} imply that the following equalities hold:
\begin{align*}
\frac{p_{00}}{p_{01}} &=  \frac{p_{00}+p_{10}}{p_{01}+\frac{p_{11}}{2}}, \\
\frac{p_{00}}{p_{10}} &=  \frac{p_{00}+p_{01}}{\sqrt{p_{10}(p_{10}+2p_{11})}}.
\end{align*}
These conditions are obtained by setting $L_{10}=0$ and $L_{01}=0$ respectively.

The above two conditions imply that
\begin{align*}
\frac{p_{01}}{p_{00}} = 2, \frac{p_{11}}{p_{10}} = 4.
\end{align*}
These two equalities along with $p_{00}+p_{01}+p_{10}+p_{11}=1$ implies that any non-trivial local maxima is of the form\footnote{This local maxima exists only when $\p(X=1) = p_{11} \leq \frac 45$, and hence there is no local maxima  when $\p(X=1) > \frac 45$. When $\p(X=1) \geq \frac 45$, there was a simple argument in \cite{naw08} that established the conjecture. It is curious that both the  approaches lead to a simple proof in this regime.}
$$ p_{00}=\frac{1-t}{3}, p_{01}=\frac{2(1-t)}{3}, p_{10}=\frac{t}{5}, p_{11}=\frac{4t}{5}.$$

We need to verify the conjecture at this point. It suffices to show that 
\begin{align*}
& I(U;Y_1) + I(V;Y_2) - I(U;V) \leq I(X;Y_1) \left( \leq \max \{I(X;Y_1), I(X;Y_2)\} \right).
\end{align*}
This is equivalent to showing (for $0 \leq t \leq 1$)
\begin{align*}
0 &\leq H(Y_1|U) - H(Y_1|X) - H(V|U) + H(V|Y_2) \\
& = (1-t) + th(\frac{1}{10}) - (1-\frac{4t}{5}) - (1-t)h(\frac 13) - th(\frac 15) + (1-\frac{2t}{5}) h(\frac 13) \\
& = t\left (\frac 45 - 1 + h(\frac{1}{10}) + h(\frac 13) - h(\frac 15) -\frac{2}{5} h(\frac 13)\right) \\
& = \frac{3t}{5} \left( h(\frac 13) - \frac{3}{2} h(\frac 19) \right),
\end{align*}
and this is clearly true as $ \frac{3}{2} h(\frac 19)  \leq h(\frac 13)$. This proves the validity of the conjecture when $p_{00},p_{01},p_{10} > 0$.

\subsubsection{Case 1.2 $p_{01}=0; p_{00},p_{10},p_{11} >0$}

In this case the conditions \eqref{eq:con11} and \eqref{eq:main1} imply that the following equality holds:
$$\frac{p_{00}}{p_{10}} = \frac{p_{00}}{\sqrt{p_{10}(p_{10}+2p_{11})}}. $$
However this cannot hold if $p_{00}, p_{10}, p_{11} > 0$. 

\subsubsection{Case 1.3 $p_{10}=0; p_{00},p_{01},p_{11}>0$}

In this case the conditions \eqref{eq:con11} and \eqref{eq:main1} implies  that
$$\frac{p_{00}}{p_{01}} = \frac{p_{00}}{p_{01}+\frac{p_{11}}{2}}. $$
Again this cannot hold if $p_{00}, p_{10}, p_{11} > 0$. 

\subsubsection{Case 1.4 $p_{00}=0; p_{10},p_{01},p_{11}>0$}

In this case the conditions \eqref{eq:con11} and \eqref{eq:main1} implies 
\begin{equation}
\sqrt{p_{10}(p_{10}+2p_{11})} = p_{01}+\frac{p_{11}}{2}.
\label{eq:conzero}
\end{equation}

To eliminate this possibility, we show that any point that satisfies \eqref{eq:conzero} cannot be a local maxima. Observe that for a local maxima one also requires
$\frac{\partial^2}{\partial \epsilon^2} I(U;Y_{1})+I(V;Y_{2})-I(U;V) \leq 0$, i.e.
$$E[E[L_{UV}|UY_{1}]^2]+ E[E[L_{UV}|VY_{2}]^2]-E[E[L_{UV}|UV]^2] \leq 0.$$
Equivalently for all perturbations satisfying $L_{11}=0$ and $p_{01} L_{01} + p_{10} L_{10} = 0$, any local maxima must satisfy
\begin{align*}
 p_{01} L_{01}^2 + p_{10} L_{10}^2 & \geq p_{01} L_{01}^2 + \frac 12 p_{10} L_{10}^2 + \frac 12 \frac{p_{10}^2}{p_{10}+2p_{11}} L_{10}^2 \\
& \quad + p_{10} L_{10}^2 +  \frac{p_{01}^2}{p_{01}+\frac{p_{11}}{2}} L_{01}^2,
\end{align*}
which is clearly not possible when $p_{10},p_{01},p_{11}>0$. This completes the proof of Case 1.

\subsection{Case 2: $X=U \oplus V$ }

We again prove the conjecture in this case by studying the local maxima. As before, the conjecture is true when two of the four terms $p_{00},p_{01},p_{10},p_{11}$ are identically $0$. When this happens, then the condition reduces to $X=\bar{U},V=1$, $X=\bar{V},U=1$, $X=0$, $X=1$, $X=U,V=0$ or $X=V,U=0$, each of which is solved by Claim \ref{cl:triv}.  Therefore, we only establish the validity of the conjecture for the remaining cases.

Consider a perturbation\footnote{Note that this perturbation is a more general perturbation that the one we have used so far, the multiplicative perturbation of the form
$q(u,v,x)=p(u,v,x)(1+ \epsilon L(u,v,x))$. The multiplication perturbation ensures that if $p(u,v,x)=0$ then $q(u,v,x)=0$; however an additive one 
need not preserve this. Setting $\lambda(u,v,x) = p(u,v,x) L(u,v,x)$ shows that the multiplicative perturbation is a special case of the additive perturbation.
It turns out that in the case $X=U \oplus V$, the analysis of the local maxima is greatly simplified if we consider an additive perturbation; as we are finding the local maxima over a possibly larger space.} $q(u,v,x)=p(u,v,x)+ \epsilon \lambda(u,v,x)$ for some $\epsilon > 0$. For this to be a valid perturbation we require 
\begin{equation}
\lambda_{001}, \lambda_{010}, \lambda_{100}, \lambda_{111} \geq 0
\label{eq:one}
\end{equation} 
as the corresponding $p(u,v,x)$ are zero.
Further let us require that the perturbation maintains $\p(X=0)$. This implies that the perturbation satisfies 
\begin{align}
\lambda_{000} + \lambda_{010} + \lambda_{100} + \lambda_{110} & = 0 \nonumber \\
\lambda_{001} + \lambda_{011} + \lambda_{101} + \lambda_{111} & = 0
\label{eq:con22}
\end{align}

For any perturbation that satisfies \eqref{eq:one} and \eqref{eq:con22} at any local maximum it must be true that the first derivative cannot be positive. This implies
\begin{equation}
H_\lambda(U,V) - H_{E(\lambda|U,Y_1)}(U,Y_1) - H_{E(\lambda|V,Y_2)}(V,Y_2) \leq 0,
\label{eq:consl}
\end{equation}
where
\begin{align*}
H_\lambda(U,V) &= -(\lambda_{001}+ \lambda_{000})\log p_{00} - (\lambda_{010}+\lambda_{011}) \log p_{01} - (\lambda_{100}+\lambda_{101})\log p_{10}\\
&\quad - (\lambda_{110} + \lambda_{111})\log p_{11} \\
H_{E(\lambda|U,Y_1)}(U,Y_1) & = - \frac{(\lambda_{000}+\lambda_{010})}{2}\log (\frac{p_{00}}{2})-\left(\frac{(\lambda_{000}+\lambda_{010})}{2}+\lambda_{001}+\lambda_{011}\right)\log (\frac{p_{00}}{2}+p_{01})\\
& \quad -
\frac{(\lambda_{110}+\lambda_{100})}{2}\log (\frac{p_{11}}{2})-\left(\frac{(\lambda_{110}+\lambda_{100})}{2}+\lambda_{101}+\lambda_{111}\right)\log (\frac{p_{11}}{2}+p_{10}) \\
H_{E(\lambda|V,Y_2)}(V,Y_2) & = -
\left(\frac{(\lambda_{001}+\lambda_{101})}{2}+\lambda_{100}+\lambda_{000}\right)\log (\frac{p_{10}}{2}+p_{00})-\frac{(\lambda_{001}+\lambda_{101})}{2}\log (\frac{p_{10}}{2})\\
&\quad -
\left(\frac{(\lambda_{011}+\lambda_{111})}{2}+\lambda_{010}+\lambda_{110}\right)\log (\frac{p_{01}}{2}+p_{11})-\frac{(\lambda_{011}+\lambda_{111})}{2}\log (\frac{p_{01}}{2}).
\end{align*}

\subsubsection{Case 2.1 $p_{00},p_{01},p_{10},p_{11} > 0$}

Let $a,b,c,d \geq 0$ and let us choose $\lambda_{001}= a = -\lambda_{000}$, $\lambda_{100}= b = -\lambda_{101}$, $\lambda_{010}= c = -\lambda_{011}$, and $\lambda_{111}= d = -\lambda_{110}$.
Observe that this choice satisfies \eqref{eq:one} and \eqref{eq:con22}.  Therefore from the constraint \eqref{eq:consl}, we must have for all choices of $a,b,c,d \geq 0$
\begin{align}
0 & \leq \frac{(a-c)}{2} \log \frac{p_{00}}{p_{00}+2p_{01}} + \frac{(d-b)}{2} \log \frac{p_{11}}{p_{11}+2p_{10}} \nonumber  \\
& \qquad + \frac{(b-a)}{2} \log \frac{p_{10}}{p_{10}+2p_{00}} + \frac{(c-d)}{2} \log \frac{p_{01}}{p_{01}+2p_{11}}
\label{eq:fir}
\end{align}

\begin{itemize}
\item Setting $a=c=k$, $b=d=l$ we require
$$ \frac{(l-k)}{2} \log \frac{p_{10}(p_{01}+2p_{11})}{p_{01}(p_{10}+2p_{00})} \geq 0 $$
for all $l,k \geq 0 $ which is true if and only if
\begin{equation}
p_{10}p_{11} = p_{01}p_{00}.
\label{eq:eq1}
\end{equation}
\item Setting $a=b=l$, $c=d=k$ we require
$$ \frac{(l-k)}{2} \log \frac{p_{00}(p_{11}+2p_{10})}{p_{11}(p_{00}+2p_{01})} \geq 0 $$
for all $l,k \geq 0 $ which is true if and only if
\begin{equation}
p_{01}p_{11} = p_{10}p_{00}.
\label{eq:eq2}
\end{equation}
\item Setting $a=d=l$, $b=c=k$ we require
\begin{equation} \frac{(l-k)}{2} \log \frac{p_{00}}{p_{00}+2p_{01}} \frac{p_{11}}{p_{11}+2p_{10}} \frac{p_{10}+2p_{00}}{p_{10}} \frac{p_{01}+2p_{11}}{p_{01}} \geq 0 
\label{eq:eq3}
\end{equation}
for all $l,k \geq 0$. Observe that equations \eqref{eq:eq1} and \eqref{eq:eq2} imply that $p_{00}=p_{11}$  and $p_{01}=p_{10}$. Let $p=p_{00}=p_{11}, q=p_{01}=p_{10}$.
Substituting this choice into \eqref{eq:eq3} implies that
$$ \frac{(l-k)}{2} \log \left(\frac{p}{p+2q}\right)^2 \left(\frac{q+2p}{q}\right)^2 \geq 0 $$
for all $l,k \geq 0$ which is true if and only if $p=q$.
\end{itemize}

Therefore the only choice of $p_{00}, p_{01}, p_{10}, p_{10}>0$ that satisfies the constraint \eqref{eq:fir} for all choices of $a,b,c,d \geq 0 $ is when
$p_{00}= p_{01} = p_{10} = p_{10} = \frac 14$. In this case observe that $U$, $V$ and $X$ are mutually independent, and the conjecture is trivially true as
$I(U;Y_1) + I(V;Y_2) - I(U;V) = 0$. 

\subsubsection{Case 2.2 One among $p_{00},p_{01},p_{10},p_{11}$ is zero}

All these cases are similar to each other and reduces to a particular $X = U \land V$ case, and hence the validity of the conjecture follows..
For example, when $p_{00}=0$,  observe that $X=0$ {\em if and only if} $U=V=1$. 
Therefore this can also be viewed as a special case of $\bar{X} = U \land V$. (Note that we have already shown the  equivalence between the 
$X = U \land V$ and $X = U \lor V$ cases.)

\medskip

\begin{tabular}{|l|l|}
\hline
Condition & Equivalent $X = U \land V$ case \\
\hline & \vspace*{-0.4cm}\\
 $p_{00}=0 $& $\bar{X} = U \land V$ \\
$p_{01}=0$ & $X = U \land \bar{V}$ \\
$p_{10}=0$ & $X = \bar{U} \land V$ \\
$p_{11}=0$ & $\bar{X} = \bar{U} \land \bar{V}$ \\
\hline
\end{tabular}

\medskip

Since the conjecture was established when $X = U \land V$, this equivalence completes the proof when $X = U \oplus V$.
Thus  Conjecture \ref{conj:intro} is established.

\section{Sum-rate evaluations of inner and outer bounds for BSSC}

We shall evaluate the inner and outer bounds for the BSSC from \cite{nae07} and \cite{naw08}. Apart from completeness, this section serves three purposes:
\begin{itemize}
\item We present a proof that to compute the maximum sum-rate of the Marton's achievable region it suffices to restrict ourselves to $|W| \leq |X|$
\item We correct a minor typo in the evaluation of the maximum sum-rate of the outer bound presented in \cite{nae07}.
\item We also compute the maximum sum-rate obtained via the Korner-Marton outer bound for the BSSC.
\end{itemize}

\subsection{On sum-rate evaluation of Marton's inner bound}

Though this evaluation was done  in \cite{naw08}, assuming the conjectured inequality; we present a slightly different, albeit more general, argument that produces the same result. We first prove that for any broadcast channel it suffices to restrict ourselves to $|W| \leq |X|$ to compute the maximum sum-rate of the Marton's achievable region.  In \cite{naw08} we proved this fact using some properties of the BSSC channel and here we present a general argument.

\begin{claim}
For a discrete memoryless broadcast channel,
to compute the maximum of 
$$ \lambda I(W;Y_1)+ (1-\lambda)I(W;Y_2) + I(U;Y_1|W) +
I(V;Y_2|W) - I(U;V|W), 0 \leq \lambda \leq 1 $$ over all choices of
$(U,V,W) \to X \to (Y_1,Y_2)$
it suffices to restrict to $|\cW| = \cX|$.
\end{claim}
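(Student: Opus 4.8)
The plan is to mimic the cardinality-reduction argument already used in Claim \ref{cl:card}, but now applied to the variable $W$ rather than to $U$ and $V$. Fix an optimal joint distribution $p(u,v,w,x)$ achieving the maximum of
$$ F = \lambda I(W;Y_1)+ (1-\lambda)I(W;Y_2) + I(U;Y_1|W) + I(V;Y_2|W) - I(U;V|W). $$
Introduce a multiplicative Lyapunov perturbation that acts only through $W$, namely $q(u,v,w,x) = p(u,v,w,x)\bigl(1 + \epsilon L(w)\bigr)$, with the validity constraints $1 + \epsilon L(w) \geq 0$ for all $w$ and $\sum_w p(w) L(w) = 0$. As in Claim \ref{cl:card}, since the perturbation only scales the conditional law of $(U,V,X)$ given each value of $w$ by a constant, the Markov structure $(U,V,W)\to X\to(Y_1,Y_2)$ and the function relationships are preserved.

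The key step is to further restrict to perturbations $L(w)$ satisfying $\e(L \mid X=x) = \sum_w p(w\mid x) L(w) = 0$ for every $x \in \cX$; such a nonzero $L$ exists whenever $|\cW| > |\cX|$ because these constraints span a space of dimension at most $|\cX|$. Under this restriction the marginal of $X$ is unchanged, hence $H_q(Y_1) = H_p(Y_1)$, $H_q(Y_2) = H_p(Y_2)$, and moreover $X$ remains conditionally independent of $W$ given... wait --- more carefully, the point is that the conditional distributions $p(x\mid w)$ are untouched, so the joint law $q(w,x)$ equals $p(w,x)$, and consequently $q(x,y_1) = p(x,y_1)$ and $q(x,y_2) = p(x,y_2)$ as well. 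I would then expand $F_q$ in terms of entropies, $F_q = \lambda H_q(Y_1) + (1-\lambda) H_q(Y_2) - \lambda H_q(Y_1\mid W) - (1-\lambda)H_q(Y_2\mid W) + H_q(Y_1\mid W) - H_q(Y_1\mid U,W) + \cdots$, collect the terms that are linear in $\epsilon$ (those coming from $H_q(W,\cdot)$-type quantities) and the purely quadratic-in-$\epsilon$ contribution, and use the first-order stationarity $\partial_\epsilon F_q|_{\epsilon=0} = 0$ together with the second-order condition $\partial_\epsilon^2 F_q|_{\epsilon=0} \le 0$ exactly as in the proof of Claim \ref{cl:card}. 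Because $q(w,x) = p(w,x)$ is frozen, every entropy term of the form $H_q(W, Y_i)$, $H_q(W,Y_i,U)$, etc., is in fact affine in $\epsilon$, so the quadratic part of $F_q$ is identically zero; hence $F_q \equiv F_p$ is constant in $\epsilon$ along the whole admissible segment.

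Finally I would push $\epsilon$ to the boundary: choose $\epsilon$ with $\min_w (1 + \epsilon L(w)) = 0$, so that some $w^\ast$ gets $q(w^\ast) = 0$, yielding an optimal distribution supported on $|\cW| - 1$ values of $W$ with the same value of $F$. Iterating, we reach $|\cW| \le |\cX|$, at which point no nontrivial $L$ satisfying the $\cX$-constraints need exist and the reduction stops. I expect the main obstacle to be purely bookkeeping: carefully verifying that \emph{all} the entropy terms involving $W$ in the expansion of $F_q$ are genuinely affine (not merely quadratic-bounded) in $\epsilon$ once $q(w,x)=p(w,x)$ is fixed, so that the second-order step is automatic and one does not need the more delicate two-sided argument of Claim \ref{cl:card}; and confirming that the terms $I(U;Y_1\mid W)$, $I(V;Y_2\mid W)$, $I(U;V\mid W)$, which each contain $W$ in a conditioning position, do not introduce any genuinely nonlinear dependence on $\epsilon$. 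A clean way to see this is to note $I(U;Y_1\mid W) = \sum_w q(w)\, g_1(p(\cdot\mid w))$ where $g_1$ depends only on the (unperturbed) conditional law given $w$, so $I_q(U;Y_1\mid W) = \sum_w p(w)(1+\epsilon L(w)) g_1(\cdot)$ is manifestly affine in $\epsilon$; the same holds for every conditional-on-$W$ term, while the $I(W;Y_i)$ terms are handled by the frozen-marginal observation. This makes the whole argument a direct, slightly simpler analogue of Claim \ref{cl:card}.
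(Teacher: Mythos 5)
Your proposal follows essentially the same route as the paper's proof: a multiplicative Lyapunov perturbation $q(u,v,w,x)=p(u,v,w,x)(1+\epsilon L(w))$ with $\sum_w p(w)L(w)=0$, the extra constraints $\e(L\mid X=x)=0$ for all $x$ (a nontrivial $L$ existing once $|\cW|>|\cX|$), the observation that the objective is affine in $\epsilon$ so that first-order stationarity forces it to be constant along the perturbation, and then pushing $\epsilon$ to the boundary to delete one value of $W$ and inducting down to $|\cW|=|\cX|$. One intermediate assertion is wrong, though it does not sink the argument: the constraint $\e(L\mid X=x)=0$ freezes only the marginal $q(x)=p(x)$ (hence $H_q(Y_1)$, $H_q(Y_2)$) and the conditionals $q(\cdot\mid w)=p(\cdot\mid w)$; it does \emph{not} give $q(w,x)=p(w,x)$, since $q(w)=p(w)(1+\epsilon L(w))$ changes with $\epsilon$. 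For the same reason the individual joint entropies $H_q(W,Y_i)$, $H_q(U,W,Y_1)$, etc., are not affine in $\epsilon$: each carries the nonlinear term $-\sum_w p(w)(1+\epsilon L(w))\log(1+\epsilon L(w))$. What is true---and what your closing ``clean way'' correctly supplies---is that these nonlinear contributions cancel in the full objective; equivalently, every conditional-on-$W$ term equals $\sum_w q(w)$ times a functional of the unperturbed conditional law $p(\cdot\mid w)$, and the unconditional entropies of $Y_1,Y_2$ are frozen, so the whole expression is affine in $\epsilon$ and only the first-order condition is needed (no second-order step, unlike Claim~\ref{cl:card}). With that justification substituted for the erroneous $q(w,x)=p(w,x)$ claim, your proof coincides with the paper's.
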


\begin{proof}
Let $p(u,v,w,x)$ achieve a maximum of the above expression. As before, we consider multiplicative Lyapunov
perturbation defined by $q(u,v,w,x)= p(u,v,w,x)(1+ \varepsilon
L(w))$. For $q(u,v,w,x)$ to be a valid probability distribution we
require the conditions $1+\varepsilon L(w) \geq 0 $, $\forall w$ and
$\sum_w p(w)L(w) = 0$. Further let us require that the perturbation
maintains $\p(X=x)$, that is
\begin{equation} \label{eqn:expe}
E(L|X=x) = \sum_{w} p(w|x)L(w) = 0.
\end{equation}
{\it Remark:} There exists nontrivial $L(w)$ if $|\mathcal W|> |\mathcal X|$. 

Observe
that
\begin{align}
&\lambda I_q(W;Y)+ (1-\lambda)I_q(W;Z) + I_q(U;Y|W) +
I_q(V;Z|W) - I_q(U;V|W)  \label{eq:one}\\
&\quad  = +\lambda H_p(Y) + (1-\lambda)H_p(Z) + \lambda \big( H_p(W,Z) +
\varepsilon H^L_p(W,Z) \big) + (1-\lambda) \big( H_p(W,Y) + \varepsilon
H^L_p(W,Y) \big) \nonumber \\
&\qquad - H_p(U,W,Y) - \varepsilon H^L_p(U,W,Y)- H_p(V,W,Z) -
\varepsilon H^L_p(V,W,Z) + H_p(U,V,W) + \varepsilon H^L_p(U,V,W) \nonumber
\end{align}
where
\begin{align*}
H^L_p(W,Y) &= -\sum_{w,y} p(w,y)L(w) \log p(w,y), \\
H^L_p(W,Z) &= -\sum_{w,z} p(w,z)L(w) \log p(w,z), \\
H^L_p(U,V,W) &= -\sum_{u,v,w} p(u,v,w)L(w) \log p(u,v,w), \\
H^L_p(U,W,Y) &= -\sum_{u,w,y} p(u,w,y)L(w) \log p(u,w,y), \\
H^L_p(V,W,Z) &= -\sum_{v,w,z} p(v,w,z)L(w) \log p(v,w,z).
\end{align*}
The first derivative with respect to $\varepsilon$ being zero
implies $$\lambda H^L_p(W,Z) + (1-\lambda) H^L_p(W,Y) -
H^L_p(U,W,Y) - H^L_p(V,W,Z) + H^L_p(U,V,W) = 0.$$ 

Substituting this into \eqref{eq:one} we obtain
\begin{align*} & \lambda I_q(W;Y)+ (1-\lambda)I_q(W;Z)  +
I_q(U;Y|W) + I_q(V;Z|W) - I_q(U;V|W) \\
& \quad = \lambda I_p(W;Y)+
(1-\lambda)I_p(W;Z)  + I_p(U;Y|W) + I_p(V;Z|W) - I_p(U;V|W)
\end{align*}
for any valid perturbation that satisfies (\ref{eqn:expe}).

Now we choose $\varepsilon$ such that $\min_w 1+\varepsilon L(w)
=0$, and let $w=w^*$ achieve this minimum. Observe that $q(w^*) = 0$
and hence there exists an $W$ with cardinality equal to $|\mathcal W|
-1$ such that $\lambda I(W;Y)+ (1-\lambda)I(W;Z)  + I(U;Y|W) +
I(V;Z|W) - I(U;V|W)$ is preserved. We can proceed by induction
until $|\mathcal W| = |\mathcal X|$. This completes the proof of
this claim.
\end{proof}

\subsubsection{Evaluation of the maximum sum-rate of Marton's region for BSSC}

Clearly from the above claim we can assume that $|\cW|=2$. Observe that for the BSSC, $I(X;Y_1) \geq I(X;Y_2)$ if and only if $\p(X=0) \leq \frac 12$. If $0 \leq \p(X=0|W=0) , \p(X=0|W=1) \leq \frac 12$ it is easy to see, using the established inequality that
$$ SR = \min (I(W;Y_1), I(W;Y_2)) + I(X;Y_1|W) \leq I(X;Y_1) \leq C, $$
where $C$ is the single channel capacity given by $h(0.2) - 0.4 \approx 0.321928..$ 
Similarly if $\frac 12 \leq \p(X=0|W=0) , \p(X=0|W=1) \leq 1$ again the sum-rate will be bounded by C. Hence we can assume that
$ 0 \leq \p(X=0|W=0) \leq \frac 12 \leq \p(X=0|W=1) \leq 1$.

Let $d = \max_{p(x)} I(X;Y_1) - I(X;Y_2)$. Then we can solve for $d = 0.10072952..$ and the optimizing choice for $\p(X=0) = 0.15843497..$. Now observe that
\begin{align*} 
SR & \leq I(W;Y_1) + \p(W=0) I(X;Y_1|W=0) + \p(W=1) I(X;Y_2|W=1)\\
& = I(X;Y_1) + \p(W=1) (I(X;Y_2|W=1) - I(X;Y_1|W=1)) \\
& \leq I(X;Y_1) + \p(W=1) d.
\end{align*}
Similarly
\begin{align*} 
SR & \leq I(W;Y_2) + \p(W=0) I(X;Y_1|W=0) + \p(W=1) I(X;Y_2|W=1)\\
& = I(X;Y_2) + \p(W=0) (I(X;Y_1|W=0) - I(X;Y_2|W=0)) \\
& \leq I(X;Y_2) + \p(W=0) d.
\end{align*}

From these two (by adding them) we can deduce that
$$ 2 SR \leq I(X;Y_1) + I(X;Y_2) + d. $$

The maximum of $I(X;Y_1) + I(X;Y_2)=0.6225562..$ occurs when $\p(X=0) = \frac 12$ and hence substituting we obtain that $ SR \leq 0.36164288...$

To show that it is indeed on the boundary of the achievable region consider the joint distribution on $X$ and $W$ as follows:\\
$p(W=0)=p(W=1)=\frac 12$\\
$p(X=0|W=0)=0.15843497.. \mbox{ and } p(X=0|W=1)= 0.84156502..$ \\
For this distribution, all inequalities reduce to equalities and SR of 0.3616.. is achieved.

\subsection{Sum-rate evaluations of the outer bounds for BSSC}

\subsubsection{Case 1: Bound \ref{bd:outer}}
To evaluate maximum of the sum-rate of the outer bound (Bound \ref{bd:outer}) it was shown \cite{nae07} that 
it suffices to consider $\p(X=0) = \frac 12$. (It is immediate using the skew-symmetry of the channel and
the inherent symmetry of the outer bound expressions.)

The sum-rate maximum is hence given by
$$ \max_{p(u,x), \p(x=0)=\frac 12} I(U;Y_1) + I(X;Y_2|U) $$
or in other words maximize
$$ \max_{p(u,x), \p(x=0)=\frac 12} I(X;Y_1) + I(X;Y_2|U) - I(X;Y_1|U)$$

Let $\p(x=0)=x$.
In \cite{naw08} it was shown that the curve $f(x)=I(X;Y_1) - I(X;Y_2)= h(\frac x2) - \frac{1-x}{2} + 1-2x$  is concave when $x \in [0, \frac 12]$ and 
convex when $x \in [0, \frac 12]$. Further it was also shown that the lower convex envelope\footnote{more precisely, in \cite{naw08} the upper concave envelope was characterized, and the characterization of the lower convex envelope follows by symmetry.} was given by
$$ g(x) = \begin{cases} \begin{array}{ll} \frac{5x}{4} f(\frac 45) & 0 \leq x \leq \frac 45 \\ f(x) & \frac 45 \leq x \leq 1 \end{array} \end{cases}. $$

From the definition of the lower convex envelope, we know that
$$   I(X;Y_1|U) - I(X;Y_2|U) \geq g(\frac 12) $$ and it easy to see that the equality is indeed achieved for a binary $U$.

Therefore 
$$ \max_{p(u,x), \p(x=0)=\frac 12} I(X;Y_1) + I(X;Y_2|U) - I(X;Y_1|U) = h(\frac 14) - 0.5 + g(0.5) \approx 0.3725562...$$

This is a correction to the implicit error I made in \cite{nae07} while calculating the lower convex envelope and obtained a bound of $0.37111...$.

\subsubsection{Case 2: Korner-Marton Bound}

To show that this sum-rate is still strictly inside the Korner-Marton\cite{mar79} outer bound observe that we need to evaluate the union over $p(u,x)$
\begin{align*}
R_1 &\leq I(U;Y_1) \\
R_2 & \leq I(X;Y_2) \\
R_1 + R_2 & \leq I(U;Y_1) + I(X;Y_2|U)
\end{align*}

Further if a point $(R_1,R_2)=(a,a)$ belongs to this region, by the skew-symmetry of BSSC, it will also belong to the union over $p(v,x)$
\begin{align*}
R_1 &\leq I(X;Y_1) \\
R_2 & \leq I(V;Y_2) \\
R_1 + R_2 & \leq I(V;Y_1) + I(X;Y_2|V)
\end{align*}
and hence to the intersection of the two regions. The key difference between the bounds is that while the former takes the intersection before the union, the latter takes the union prior to the intersection.

Suppose we wish to compute
$$ \max_{p(u,x)} I(X;Y_1) + I(X;Y_2|U) - I(X;Y_1|U) $$
then from the earlier discussion, this will be the maximum over $x \in [0,1]$ of
$$ h(\frac x2) - x - g(x)$$

It is easy to see that the global maximum will lie when $x \in [0, \frac 45]$ (otherwise maximum occurs when $U$ is trivial and equals $I(X;Y_2)$).
Taking derivatives we obtain that maximum occurs when
$$ \frac 12 \log_2 \frac{2-x}{x} - 1 - \frac 54 f(\frac 45) = 0$$ or 
$$ x^* = \frac{2}{1+2^c} \approx 0.4571429... $$
where $c= 2(1+\frac 54 f(\frac 45)) \approx 1.7548875...$

Thus the maximum sum rate given by
$$ \max_{p(u,x)} I(X;Y_1) + I(X;Y_2|U) - I(X;Y_1|U) \approx 0.3743955... $$

The pair $(U,X)$ that achieves the maximum can be characterized by
$$ \p(U=0)=1-a, \p(X=0|U=0)=0, \p(U=1)=a, \p(X=0|U=0)=\frac 45$$ where
$0.8 * a = x^*$ or $a \approx 0.5714286..$

Observe that for this choice
\begin{align*}
 I(U;Y_1) &= h(\frac{x^*}{2}) - ah(0.4) \approx 0.2206837... \\
 I(X;Y_2|U) & \approx 0.1537118.. \\
 I(X;Y_2) &\approx 0.3006499
 \end{align*}
 
 Therefore the point $(R_1,R_2)=(0.1871978..,0.1871978..)$ lies on the boundary of the Korner-Marton outer bound. In summary, the maximum sum rate given by Korner-Marton outer bound for the BSSC is $0.3743955... $.
 
 \section*{Conclusion}
 We establish an inequality for the binary skew-symmeteric broadcast channel that was conjectured in \cite{naw08}. Thus we have quantified the gap between the outer bounds and the inner bounds for this channel. It would be great to determine which of the bounds are weak, and if possible improve them at least for this interesting channel.

\bibliographystyle{amsplain}
\bibliography{mybiblio}

\providecommand{\bysame}{\leavevmode\hbox to3em{\hrulefill}\thinspace}
\providecommand{\MR}{\relax\ifhmode\unskip\space\fi MR }
\providecommand{\MRhref}[2]{%
  \href{http://www.ams.org/mathscinet-getitem?mr=#1}{#2}
}
\providecommand{\href}[2]{#2}
\begin{thebibliography}{1}

\bibitem{cov72}
T~Cover, \emph{Broadcast channels}, IEEE Trans. Info. Theory \textbf{IT-18}
  (January, 1972), 2--14.

\bibitem{cov98}
\bysame, \emph{Comments on broadcast channels}, IEEE Trans. Info. Theory
  \textbf{IT-44} (October, 1998), 2524--2530.

\bibitem{goa09}
Amin~Aminzadeh Gohari and Venkat Anantharam, \emph{Evaluation of marton's inner
  bound for the general broadcast channel}, CoRR \textbf{abs/0904.4541} (2009).

\bibitem{hap79}
B~Hajek and M~Pursley, \emph{Evaluation of an achievable rate region for the
  broadcast channel}, IEEE Trans. Info. Theory \textbf{IT-25} (January, 1979),
  36--46.

\bibitem{mar79}
K~Marton, \emph{A coding theorem for the discrete memoryless broadcast
  channel}, IEEE Trans. Info. Theory \textbf{IT-25} (May, 1979), 306--311.

\bibitem{nae07}
C~Nair and A~El~Gamal, \emph{An outer bound to the capacity region of the
  broadcast channel}, IEEE Trans. Info. Theory \textbf{IT-53} (January, 2007),
  350--355.

\bibitem{naw08}
C~Nair and V~W Zizhou, \emph{On the inner and outer bounds for 2-receiver
  discrete memoryless broadcast channels}, Proceedings of the ITA Workshop
  (2008).

\end{thebibliography}

\end{document}